\theoremstyle{plain}   \newtheorem{Lem}{Lemma}
\theoremstyle{plain} 	
\theoremstyle{plain} 	
\theoremstyle{plain} 	\newtheorem{Prop}{Proposition}
\theoremstyle{plain} 	
\theoremstyle{plain}	
\theoremstyle{definition}	\newtheorem*{Def}{Definition} 
\theoremstyle{plain}	
\newtheoremstyle{algorithm}{\topsep}{\topsep}%
 {}%         Body font
 {}%         Indent amount (empty = no indent, \parindent = para indent)
 {\bfseries}% Thm head font
 {.}%        Punctuation after thm head
 { }%     Space after thm head (\newline = linebreak)
 {\thmname{#1}\thmnumber{ #2}\thmnote{ #3}}%         Thm head spec
\theoremstyle{algorithm} \newtheorem{Alg}{Algorithm}
\newcommand{\figref}[1]{Fig.~\ref{#1}}
\newcommand{\eM}     {$\epsilon$\protect\nobreakdash-machine}
\newcommand{\eMs}    {$\epsilon$\protect\nobreakdash-machines}
\begin{document}

\title{Enumerating Finitary Processes}

\author{Benjamin D. Johnson}
\email{bjohnson@math.ucdavis.edu}
\affiliation{Complexity Sciences Center}
\affiliation{Mathematics Department}

\author{James P. Crutchfield}
\email{chaos@ucdavis.edu}
\affiliation{Complexity Sciences Center}
\affiliation{Mathematics Department}
\affiliation{Physics Department\\
University of California at Davis,\\
One Shields Avenue, Davis, CA 95616}
\affiliation{Santa Fe Institute\\
1399 Hyde Park Road, Santa Fe, NM 87501}

\author{Christopher J. Ellison}
\email{cellison@cse.ucdavis.edu}
\affiliation{Complexity Sciences Center}
\affiliation{Physics Department\\
University of California at Davis,\\
One Shields Avenue, Davis, CA 95616}

\author{Carl S. McTague}
\email{c.mctague@dpmms.cam.ac.uk}
\affiliation{DPMMS, Centre for Mathematical Sciences,\\
University of Cambridge,\\
Wilberforce Road, Cambridge, CB3 0WB, England}
\affiliation{Santa Fe Institute\\
1399 Hyde Park Road, Santa Fe, NM 87501}

\date{\today}

\bibliographystyle{unsrt}

% ************************* ABSTRACT *************************
\begin{abstract}
We show how to efficiently enumerate a class of finite-memory stochastic
processes using the causal representation of \eMs. We characterize \eMs\ in
the language of automata theory and adapt a recent algorithm for generating
accessible deterministic finite automata, pruning this over-large class down to
that of \eMs. As an application, we exactly enumerate topological \eMs\ up to
eight states and six-letter alphabets.
\end{abstract}

\pacs{
02.50.-r  %  Probability theory, stochastic processes, and statistics
89.70.+c  %  Information science 
05.45.Tp  %  Time series analysis
02.50.Ey  %  Stochastic processes 
% 02.50.Ga  %  Markov processes 
% 05.20.-y  %  Classical statistical mechanics
% 05.45.-a  %  Nonlinear dynamics and nonlinear dynamical systems
% 89.75.Kd  %  Complex Systems: Patterns 
}
\preprint{Santa Fe Institute Working Paper 10-11-027}
\preprint{arxiv.org:1011.0036 [cs.FL]}

\keywords{epsilon-machine, orderly enumeration}

\maketitle

% ****************************************************************
%\tableofcontents
%  ************************* INTRODUCTION *************************

\section{Introduction}

What does the landscape of stochastic processes look like? Some classes of
process---e.g., modeled by Markov chains and Hidden Markov models, finite or
denumerable \cite{Keme65a,Keme76a,Rabi86a,Ephr02a}---are familiar to us since
they have proven so useful as models of randomness in real world systems. Even
if this familiarity belies a now-extensive understanding for particular classes,
it begs the question of the intrinsic organization and diversity found in the
space of all stochastic processes. Randomly selecting a stochastic process,
how often does one find that it saturates the entropy rate? How many distinct
processes are there at a given entropy rate or with a given number of states?
Answers to these and related questions will go some distance to understanding
the richness of stochastic processes and these, in turn, will provide hints as
to what is possible in nature.

Stochastic processes show up in an exceedingly wide range of fields, but they
are not generally analyzed or classified in broad swaths. In an attempt to
address such concerns, we show how to enumerate the class of stochastic
processes that admit the causal representation of finite-state \eMs.

An \eM\ is the minimally complex, maximally predictive representation that
completely captures all of a stochastic process's information storage and
processing properties. The \eM\ representation allows for direct analysis of
the underlying process using only relevant information, and it provides a
framework for comparing different processes through common, measurable
quantities. The literature on computational mechanics
\cite{CompMechMerge}, the area responsible for the theory of \eMs,
provides details about the construction of \eMs\ from process output, proof of
their optimality, various information-theoretic quantities that can be
calculated from the \eM, and more.

Here, we consider stationary stochastic processes over discrete states and
discrete alphabets. Given that each such process can be completely represented
by its \eM, to enumerate all stochastic processes it suffices to enumerate all
\eMs. Even if one restricts to the case of \eMs\ with finitely many states,
this task appears to be extraordinarily difficult. So, as a first step, we
enumerate a subclass of \eMs\ called topological \eMs, which represent a
subclass of all finite-memory processes. In a sequel, we extend the ideas
presented here to more general stochastic processes and their \eMs.

Although we are a long way from mapping the landscape of all stochastic
processes, enumerating a subclass of finite-memory stochastic processes is
useful for a number of reasons. First is basic understanding. One would simply
like to know how many processes there are for a given number of states and
alphabet size. Moreover, if we fix one of these parameters and increase the
other, it is informative to see how the number of distinct processes scales
as well. Second, it allows for a thorough survey of process characteristics.
An example of a such a survey is found in Ref.~\cite{Feld08a}. Third, an
enumerated list of processes can be used to rigorously establish properties for
various kinds of complex systems. A library like this was used in
Refs.~\cite{McTa04a} and~\cite{Crut02a} to prove theorems about pattern
formation in cellular automata. Finally, and rather more generally, one needs
to be able to sample and explore the space of processes in a random or a
systematic way, such as required in Bayesian inference~\cite{Stre07a}.

Starting from an algorithm initially designed to enumerate deterministic finite
automata, we use \eM\ properties as a selection criteria for these automata,
resulting in the set of topological \eMs\ (and the processes they describe) as
a result. Our development of this is organized as follows. First, we briefly
discuss our previous approach to this problem using a different orderly
enumeration algorithm due to Read~\cite{Read78a}, followed by an
overview of the algorithm on which our enumeration scheme is 
based~\cite{Alme06a,Alme07a}. Second, we lay out the machinery of this algorithm, 
reviewing automata theory and computational mechanics. We define the necessary 
concepts as they apply to topological \eM\ generation and enumeration.
Third, we then describe our algorithm, give pseudocode for its implementation,
and prove that it successfully enumerates all topological \eMs. 
Fourth, we present enumeration results as a function of the number of states
and symbols. We discuss, as well, the performance of the new algorithm,
comparing it to our previous algorithm, and explain the improvements.

%  ************************* RELATED WORK *************************
\section{Related Work}

The enumeration of \eMs\ has not, to our knowledge, been previously explored,
outside of the above-cited works. The enumeration of certain classes of DFAs,
in contrast, has been pursued with varying degrees of success. Of particular
interest, strongly connected and minimal complete finite automata were
separately enumerated in Refs.~\cite{Robi85a} and ~\cite{Naru77a}, respectively. See Ref.~\cite{Doma02a} and references therein for more details on other recent efforts.

Much of the literature on computational mechanics focuses on \eMs\ from the
standpoint of Markov chains and stochastic processes and, therefore, typically
uses the transition matrices as an \eM's representation. Our first approach for
enumerating finitary processes focused on generating all possible transition
matrices and, hence, all \eMs, interpreted as labeled directed graphs.
Read~\cite{Read78a} presented
an orderly generation algorithm that could be used to efficiently generate
certain classes of combinatorial objects. Among the objects that can be
generated are directed and undirected graphs, rooted trees, and tournaments
(interpreted as a special class of directed complete graphs). The essence of
Read's algorithm is that, given the complete list $\mathcal{L}_m$ of graphs 
with $n$ nodes and $m$ edges, we can construct the complete list
$\mathcal{L}_{m+1}$ of graphs with $n$ nodes and $m+1$ edges without having to
run an isomorphism check against each of the already constructed graphs. This
offers a significant speed improvement versus the classical method.

We initially adapted Read's algorithm to generate all edge-labeled multi-digraphs
(with loops). From this extensive list, we then eliminated graphs that were not 
strongly connected and minimal in the sense of finite automata theory. While 
this algorithm was successful, it had three main performance drawbacks: 1) A 
large memory footprint, as $\mathcal{L}_m$ must be stored to generate 
$\mathcal{L}_{m+1}$; 2) an improved, but still extensive, isomorphism check for 
each generated graph---the worst-case scenario requires $n!$ comparisons for
each generated graph; and 
3) generation of a substantially larger class than needed and, as a consequence,
many graphs to eliminate.

Our second approach, and the one presented in detail here, uses a different
representation of \eMs, looking at them as a type of deterministic finite
automata (DFA). The new algorithm suffers from none of the previous method's
problems. Although, it should be noted that this method cannot be used to
enumerate the generalized structures available via Read's algorithm.

In his thesis, Nicaud ~\cite{Nica00a} discussed the enumeration of
``accessible'' DFAs restricted to binary alphabets. These results were then
independently extended to $k$-ary alphabets in Refs.~\cite{Cham05a} and
~\cite{Bass07a}. Recently, Almeida et al.~\cite{Alme06a,Alme07a} developed an algorithm
that generates all possible accessible DFAs with $n$ states and $k$ symbols
using a compact string representation initially discussed in Refs.
~\cite{Alme06a,Alme07a}. They showed that considering the ``skeleton" of these DFAs as $k$-ary trees with $n$ internal nodes guarantees that a DFA's states are all accessible from a start state. From there, they procedurally add edges to the tree in all possible ways to generate all DFAs. As it is possible to generate all such trees, they show that it is possible to generate all accessible DFAs.
They continue on to discuss their enumeration in comparison to the methods of
Refs.~\cite{Cham05a} and ~\cite{Bass07a}, as well as giving a brief commentary
on the percentage of DFAs that are minimal for a given number of states and
symbols.

%  ************************* DEFINITIONS *************************
\section{Automata Representations}

\newcommand{\nullword} {\lambda}

We start with notation and several definitions from automata
theory~\cite{Hopc79} that serve as the basis for the algorithm.

\begin{Def} A \emph{deterministic finite automaton} is a tuple
$\langle Q,\Sigma,\delta,q_0,F \rangle$, where $Q$ is a finite set of states,
$\Sigma$ is a discrete alphabet, $\delta :Q\times\Sigma\to Q$ is the transition
function, $q_0$ is the \emph{start state}, and $F\subseteq Q$ is the set of
\emph{final} (or \emph{accepting}) \emph{states}.
\end{Def}

We extend the transition function in the natural way, with
$\delta(q,\nullword) = q$, for all $q \in Q$, and for
$v,v'\in \Sigma,\ \delta(q,vv') =
\delta(\delta(q,v),v')$.
Here, $\nullword$ denotes the empty word.

With $|Q|=n$ and $|\Sigma| = k$, we take our set of states to be
$Q=\{0,\ldots,n-1\}$ and our alphabet to be $\Sigma=\{0,\ldots,k-1\}$.
When context alone is not clear, states and symbols will be
denoted by $q_i$ and $v_j$, respectively. We will use $F=Q$
(all states are accepting) for our algorithm, although this is not
a general characteristic of DFAs, but is a property of \eMs.

\begin{Def}
A DFA is \emph{complete} if the transition function $\delta$ is
total. That is, for any state $q\in Q$ and symbol $v\in
\Sigma, \delta(q,v)=q'$ for some $q'\in Q$. 
\end{Def}
The DFAs generated by the Almeida et al algorithm may be
incomplete~\cite{Alme06a,Alme07a}. Shortly, we will see this is a necessary condition
for the DFA to be a topological \eM.

\begin{Def}
Two states, $q$ and $q'$, of a DFA are said to be \emph{equivalent} if for all words $w\in\Sigma^*$, $\delta(q,w)\in F$ if and only if $\delta(q',w)\in F$. That is, for every word $w$, following the transitions from $q$ and $q'$ both lead to accepting or nonaccepting states. A DFA is \emph{minimal} if there are no pairwise equivalent states.
\end{Def}
As we take $F=Q$ for \eMs, we can simplify the idea of equivalence somewhat.
Two states of a topological \eM\ are equivalent if the sequences following
each state are the same.
\begin{Def}
A DFA is \emph{accessible} or \emph{initially connected} if for any state
$q\in Q$, there exists a word $w\in \Sigma^*$ such that $\delta(q_0,w)= q$. 
\end{Def}

Simply put, there is a directed path from the initial state to any other state.
The reverse is not necessarily true.

\begin{Def}
A DFA is \emph{strongly connected} if for any two states
$q,q'\in Q$, there is a word $w\in \Sigma^*$ such that
$\delta(q,w)=q'$. Equivalently, for any state $q\in Q$, setting
$q_0=q$ results in the DFA still being accessible.
\end{Def}

\begin{Def}
Two DFAs are \emph{isomorphic} if there is a one-to-one map between the
states that 1) maps accepting and nonaccepting states of one DFA
to the corresponding states of the other, 2) preserves adjacency,
and 3) preserves edge labeling when applied to $\delta$.
\end{Def}

\begin{Def}
A \emph{finite \eM} is a probabilistic finite-state machine with a set of
\emph{causal states} $\{\sigma_0,\ldots,\sigma_{n-1}\}$, a finite alphabet
$\{v_0,\ldots,v_{k-1}\}$, and
transition probabilities on the edges between states, given by a set of
transition matrices $T^{(i)}$, $i\in \{0,\ldots,k-1\}$. Given the current state,
a transition is determined by the output symbol. States are probabilistically
distinguishable, so the \eM\ is minimal.
\end{Def}

An \eM\ has transient and recurrent components, but we only focus on the
recurrent portion, as the transient component can be calculated from the
recurrent. In the following, when we talk about \eMs, we implicitly 
refer to the recurrent states. With this restriction, \eMs\ are also strongly
connected.

Figure~\ref{fig:EvenProcess} gives the \eM\ for the
\emph{Even Process}~\cite{Crut01a}.
The Even Process produces binary sequences in which all blocks of uninterrupted
$1$s are even in length, bounded by $0$s. Furthermore, after each even length
is reached, there is a probability $p$ of breaking the block of $1$s by
inserting a $0$. If a $0$ is inserted, then the same rule applies again.

\begin{figure}[th]
\centering
\includegraphics{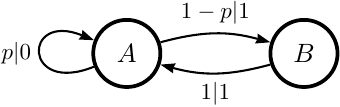}
\caption{Even Process. The transition labels denote the probability
  $p \in (0,1)$ of generating symbol $x$ as $p|x$.
  }
\label{fig:EvenProcess}
\end{figure}

\begin{Def}
A \emph{topological \eM} is an \eM\ where the transition probabilities from
a single state are uniform across all outgoing edges.
\end{Def}

The topological \eM\ for the Even Process is given in 
Fig.~\ref{fig:TopEvenProcess}. We see that the transitions on both edges leaving
state $A$ have probability $1/2$, instead of $p$ and $1-p$ as they were in the 
original Even Process \eM.

\begin{figure}[th]
\centering
\includegraphics{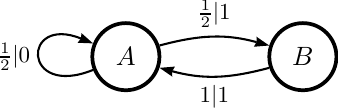}
\caption{Topological \eM\ for the Even Process. Transition probabilities are
  uniform across edges leaving state $A$.
  }
\label{fig:TopEvenProcess}
\end{figure}

Since the transition probabilities are uniform across all edges leaving each
single state, we only need to know their number. As far as the enumeration
algorithm is concerned, we may effectively ignore the
probabilities and focus instead on where the edges go.

This makes clear the name topological \eM: We are only interested in the
topological structure (connectivity or adjacency) as this determines all its
other properties.

One of the key reasons for the success of the algorithm is its compact
representation of DFAs which allows for direct enumeration. Recall that
$|\Sigma|=k$ and suppose that there is a fixed ordering $0,\ldots,n-1$ on the
states $Q$.

\begin{Def}
A DFA's \emph{string} $S = [t_0, t_1, \ldots, t_{nk-1}]$ is an
$nk$-tuple that specifies the terminal state $t_i \in Q$ on each outgoing edge.
The first $k$ entries in the string correspond to the states reached by
following the edges labeled $0,\ldots,k-1$ that start in state $0$. The next
$k$ $t_i$s correspond to the edges that start in state $1$ and so on. Thus,
for each of the $n$ states, there are $k$ specified transitions. If an outgoing
edge does not exist, the corresponding index is marked with $t_i = -1$.
\end{Def}

For clarity, let's consider the topological \eM\ for the Even Process. Let
states $A$ and $B$ be denoted by $0$ and $1$, respectively. The transition
symbols will also be $0$ and $1$, though there is no connection between the
two labelings. As $A$ transitions to $A$ on a $0$ and to $B$ on a $1$, the
terminal states for these two transitions are $0$ and $1$, respectively. $B$
has no outgoing transition on symbol $0$, so that will be denoted $-1$ in the
string, while the transition from $B$ to $A$ on a $1$ will be given by $0$.
Thus, the string representation for the Even Process is $S = [0,1,-1,0]$.

In the definition of a DFA's string, we assumed a fixed ordering on the states.
In general, there are $n!$ ways to label the states and as many strings, so we
need a way to fix a labeling unambiguously. To do this, we label the states in
the order in which they are reached by following edges lexicographically from
state $q_0$. Start with $q_0 \equiv 0$, then follow the edges coming out of
$q_0$ in order: $0,1,\ldots,k-1$. The first state reached that is not state
$0$ is labeled as $1$. The next state that is not $0$ or $1$ becomes state
$2$, and so on. Once the edges $0,\ldots,k-1$ have been explored, the procedure
is repeated, starting from state $1$, then state $2$, and so on---until all the
states have been labeled. Given the initial state $q_0$ of an accessible DFA,
the edges uniquely determine the labeling of all the other states in the DFA.
A proof can be found in Refs.~\cite{Alme06a,Alme07a}. Note that the DFA must be
accessible for this to work, else states will be missed in the labeling process.

\begin{Def}
Given a DFA string $S$, the corresponding \emph{flag} $f=[f_0,f_1,\ldots,f_{n}]$
is an $n+1$ tuple, with $f_0 = -1$, $f_n =nk$, and
$f_i = \min \{j : S_j = i \}$.
That is, $f_i$ is the index of the first occurrence of $i$ in the DFA string
$S$. Note that as the DFA is accessible, $f_i\leq ik-1$.
\end{Def}

The flag for the Even Process shown above is $[-1, 1, 4]$.

%  ************************* ALGORITHM *************************
\section{Enumeration Algorithm}

To enumerate and generate all topological \eMs, we begin with the Almeida et al
algorithm~\cite{Alme06a,Alme07a} that generates all accessible DFAs, of which
topological \eMs\ are a subclass. We then eliminate
those DFAs that are not \eMs. The following Lemmas help with this
process.

\begin{Lem}
A topological \eM\ with $n$ states has at least $n$ transitions.
\label{lem:MinEdgesLem}
\end{Lem}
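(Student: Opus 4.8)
The plan is to use the strong connectivity of topological $\epsilon$-machines together with the uniformity of outgoing transition probabilities. Recall that a topological $\epsilon$-machine is strongly connected (we restrict to recurrent states), so from every state there must be at least one outgoing edge — otherwise that state would be a sink, and no state reachable only through it could return, contradicting strong connectivity. More directly, if some state $q$ had no outgoing transition at all, then $\delta(q,w)$ is undefined for every nonempty $w$, so $q$ cannot reach any other state, violating the strong-connectedness requirement (unless $n=1$, in which case the single state still needs an outgoing edge to itself for the machine to generate an infinite process). Hence every one of the $n$ states has at least one outgoing transition.

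First I would make the observation above precise: for each state $q_i$, $i \in \{0,\ldots,n-1\}$, the set $\{v \in \Sigma : \delta(q_i,v) \text{ is defined}\}$ is nonempty, because a topological $\epsilon$-machine models a stationary process over bi-infinite sequences and its recurrent states are strongly connected, so each state must emit at least one symbol. Second, I would count: the total number of transitions is $\sum_{i=0}^{n-1} |\{v : \delta(q_i,v)\text{ defined}\}| \geq \sum_{i=0}^{n-1} 1 = n$. This gives the claimed lower bound of $n$ transitions.

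The only subtlety — and the step I expect to require the most care — is justifying that ``every state has an outgoing edge'' follows cleanly from the definitions as stated, rather than from an informal appeal to process generation. The cleanest route is strong connectivity: the excerpt states that $\epsilon$-machines (restricted to recurrent states) are strongly connected, meaning for any $q,q' \in Q$ there is a word $w$ with $\delta(q,w) = q'$. Taking $q' \ne q$ (possible when $n \geq 2$) forces a nonempty such $w$, whose first symbol is an outgoing transition from $q$; when $n = 1$, strong connectivity with $q' = q = q_0$ still requires some nonempty $w$ with $\delta(q_0,w) = q_0$ (the empty word is excluded since it would make the machine generate nothing), again yielding an outgoing edge. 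So in all cases each state contributes at least one transition, and summing over the $n$ states completes the proof.
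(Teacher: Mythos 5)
Your proof is correct and rests on the same key observation as the paper's: strong connectivity forces every state to have at least one outgoing transition, and summing over the $n$ states gives the bound. The paper simply phrases this in contrapositive form (at most $n-1$ transitions would leave some state with no outgoing edge, contradicting strong connectivity), so the two arguments are essentially identical; your extra care with the $n=1$ case is a minor refinement the paper omits.
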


\begin{proof}
Assume there are at most $n-1$ transitions. Then there is at least one state
with no outgoing transition. There is no path from this state to any other
state, so this cannot be an \eM, as it is not strongly connected.
\end{proof}

\begin{Lem} A topological \eM\ with $n>1$ states and alphabet size $k$ can
have at most $nk-1$ transitions.
\label{lem:MaxEdgesLem}
\end{Lem}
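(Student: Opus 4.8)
The plan is to argue by contradiction: suppose a topological \eM\ with $n>1$ states and alphabet size $k$ has the maximum possible number $nk$ of transitions, i.e.\ the underlying DFA is complete. I will show this forces two states to be equivalent in the sense defined above (the sequences following them coincide), contradicting minimality of an \eM.

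First I would observe that if the DFA is complete, then from every state there is an outgoing edge on every symbol, so every bi-infinite (or, since we work with the recurrent part, every infinite forward) symbol sequence is generated from every state. Concretely, for any state $q$ and any word $w\in\Sigma^*$, $\delta(q,w)$ is defined, so the set of admissible futures from $q$ is all of $\Sigma^*$ (equivalently $\Sigma^\omega$), independent of $q$. Since for a topological \eM\ two causal states are distinguished precisely by having distinct sets of allowed future sequences — this is exactly the simplified notion of equivalence noted after the definition of minimality, where "the sequences following each state" must differ — all $n$ states would have the identical follower set $\Sigma^*$. With $n>1$ this gives at least two equivalent states, so the machine is not minimal and hence not an \eM. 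Therefore a topological \eM\ cannot have $nk$ transitions, and since the number of transitions is an integer at most $nk$, it is at most $nk-1$.

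The step I expect to be the main obstacle — or at least the one needing the most care — is making precise the claim that for a topological \eM\ "having the same follower set" implies "equivalent causal states." One must invoke that causal states are defined as equivalence classes of pasts with the same conditional future distribution, and that for a \emph{topological} \eM\ the future distribution is determined by the support (the set of allowed future words), so equality of supports for two states is exactly equality of the conditional distributions; thus the two putative causal states would have been merged in the construction of the \eM. An alternative, more self-contained route that sidesteps this is to note directly that in the \eM\ formalism a state with a full follower set $\Sigma^*$ carries no predictive information, so a process whose \eM\ had such a state would have that state absorbing — and with $n>1$ recurrent states strong connectivity together with a full-follower-set state again forces a collapse. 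Either way the essential content is: completeness $\Rightarrow$ all follower sets equal $\Rightarrow$ non-minimal, so completeness is impossible for $n>1$, giving the bound $nk-1$.

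Finally, for the $n=1$ edge case one sees the hypothesis $n>1$ is needed: a single-state topological \eM\ on a $k$-letter alphabet (a "biased coin"–type process) has all $k=nk$ transitions present and is perfectly valid, which is why the lemma explicitly excludes it. I would remark on this briefly to show the bound is tight and the hypothesis is not superfluous.
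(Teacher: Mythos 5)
Your proposal is correct and is essentially the paper's own proof: both argue that $nk$ transitions make the DFA complete, so every word $w\in\Sigma^*$ is accepted from every state, all states are pairwise equivalent, and minimality fails. The extra care you anticipate needing about follower sets versus causal-state equivalence is not required here --- the paper works directly with the DFA notion of equivalence under $F=Q$, which is exactly the ``same sequences follow each state'' criterion you invoke.
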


\begin{proof}
The number of transitions is at most $nk$, as each state can have
at most $k$ transitions. Suppose that an \eM\ has $nk$
transitions. Then every word $w\in\Sigma^*$ is accepting for every
state, so all states are pairwise equivalent. This cannot be an
\eM, since it is not minimal. Thus, there are at most $nk-1$
transitions.
\end{proof}

This establishes our earlier claim that topological \eMs\ are incomplete.

\begin{Lem} A topological \eM\ with $n$ states has $n$ isomorphic string
automata representations.
\label{lem:NumIsomorphsLem}
\end{Lem}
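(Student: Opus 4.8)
The plan is to identify the string representations of a topological \eM\ $M$ with the strings produced by running the canonical labeling procedure from each of the $n$ possible start states, and then to establish two things: (i) each such string encodes an automaton isomorphic to $M$, and (ii) distinct start states give distinct strings, so that there are exactly $n$ of them.

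Let $M$ be a topological \eM\ on states $Q$ with $|Q|=n$, transition function $\delta$, and $F=Q$. Since $M$ is strongly connected, setting $q_0=q$ leaves it accessible for every $q\in Q$, so the lexicographic labeling described above is well defined (this is the fact proved in \Refcite{Alme07a}) and yields a string $S^{(q)}$. By construction $S^{(q)}$ is simply $M$ with its states renamed by a bijection $\ell_q\colon Q\to\{0,\dots,n-1\}$ with $\ell_q(q)=0$, and decoding the DFA from $S^{(q)}$ returns $M$ relabeled by $\ell_q$; hence every $S^{(q)}$ encodes an automaton isomorphic to $M$, and there are at most $n$ such strings.

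The core step is to show that $M$, viewed as a DFA, has no nontrivial automorphism, i.e.\ no state bijection other than the identity that preserves adjacency and edge labels. If $\phi$ is such an automorphism, then $\delta(\phi(q),v)$ is defined exactly when $\delta(q,v)$ is, and then equals $\phi(\delta(q,v))$; inducting on the length of $w$ gives that $\delta(\phi(q),w)$ is defined exactly when $\delta(q,w)$ is. Since $F=Q$, a word $w$ is accepting from $q$ iff $\delta(q,w)$ is defined, so $q$ and $\phi(q)$ are equivalent, and minimality of $M$ forces $\phi(q)=q$. Given this, if $S^{(q)}=S^{(q')}$ with common decoded DFA $D$, then $\ell_q$ and $\ell_{q'}$ are both isomorphisms $M\to D$, so $\ell_{q'}^{-1}\circ\ell_q$ is an automorphism of $M$, hence the identity, and evaluating at $q$ gives $q=\ell_{q'}^{-1}(0)=q'$. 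Thus the $n$ strings $S^{(q)}$ are pairwise distinct. (For the stronger reading that these are the \emph{only} canonical strings isomorphic to $M$: any canonical string whose decoded DFA $D'$ admits an isomorphism $\psi\colon D'\to M$ must equal $S^{(\psi(0))}$, because the canonical string is a complete invariant of a DFA together with its designated start state.)

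I expect the automorphism-triviality argument to be the only real content of the proof: it is where minimality of the \eM\ is used essentially, and the one subtlety is matching the DFA-string notion of isomorphism of \Refcite{Alme07a} (a state bijection commuting with $\delta$) against the equivalence/minimality notion. Everything else---accessibility from an arbitrary start state, and uniqueness of the labeling once the start state is fixed---follows immediately from strong connectivity and the results already quoted, so the write-up should be brief.
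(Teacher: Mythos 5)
Your proof is correct, and it is noticeably more careful than the one in the paper. The paper's own argument is just two sentences: since the \eM\ is strongly connected, any of the $n$ states may serve as $q_0$, and ``as state $q_0$ determines the labeling of the states, and so the string representations, there are exactly $n$ such representations.'' That argument silently assumes what you correctly identify as the only nontrivial point --- that distinct choices of start state yield \emph{distinct} strings. Without that, one only gets ``at most $n$.'' Your automorphism-triviality step supplies exactly the missing justification: because $F=Q$, a word is accepting from $q$ iff $\delta(q,w)$ is defined, so any label-preserving automorphism maps each state to an equivalent state, and minimality then forces it to be the identity; hence the $n$ labelings $\ell_q$ are pairwise distinct as strings. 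This is where minimality is genuinely used, and it is absent from the paper's proof (the paper uses minimality only elsewhere, in Lemma~\ref{lem:MaxEdgesLem} and in step 4 of the test algorithm). So: same overall decomposition (one string per choice of $q_0$, strong connectivity guaranteeing each is well defined), but your version closes a real gap; the paper's version buys brevity at the cost of leaving the count ``exactly $n$'' unproven in the presence of potential automaton symmetries.
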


\begin{proof} An \eM\ is strongly connected. In the above
definition of a strongly connected DFA, we gave an equivalent
characterization where any state may serve as $q_0$ and result in
an accessible DFA. As state $q_0$ determines the labeling of the
states, and so the string representations, there are exactly $n$
such representations.
\end{proof}

We now need to determine the canonical representation for a given
topological \eM. Given the $n$ different strings that all
represent the \eM\ equally well, which do we add to our enumerated
list, and how do we know if we already have some isomorphism of an
\eM\ on our list?

A closed-form expression to exactly count the number $B^1_{n,k}$ of incomplete,
accessible DFAs with $n$ states and alphabet size $k$ was developed in 
Refs.~\cite{Alme06a,Alme07a}. A bijection between the integers $0,\ldots,B^1_{n,k}-1$ and the 
DFAs generated by the algorithm was also given. In this way, we can determine 
the $i^\text{th}$ DFA generated by the algorithm and likewise, given an 
arbitrary accessible DFA, we can determine exactly where in the generation
sequence it occurs. This bijection allows us to easily determine
whether an \eM\ is the canonical representation for its
isomorphism class. We denote by $B^1_{n,k}(S)$ the index of the
string representation $S$ in the enumeration process.
Appendix \ref{app:Mapping} gives the details.

\begin{Def} Given the $n$ different string representations of a topological
\eM---$S_1,S_2,\ldots,S_n$---the \emph{canonical representation} $\widehat{S}$ 
is the string with the smallest $B^1_{n,k}$ value. It is the first of the
isomorphisms generated by the enumeration process:
\begin{align*}
 	\widehat{S} \equiv \min_{1 \leq i \leq n} B^1_{n,k}(S_i)~.
\end{align*}
\end{Def}

With this definition of a canonical representation, it is simple to determine
whether a given \eM\ has already been generated: Compute the index
$B^1_{n,k}(S)$ of its representation $S$. Take each state as $q_0$ and compute
the new string representation. If any of the resulting representations has a
lower index than the original, then the given \eM\ is not canonical. So, we
ignore it and generate the next DFA in the enumeration sequence.

To solidify the above ideas, consider the topological \eM\ in \figref{fig:TopEM}.
Note that since transition probabilities are not relevant to the enumeration
process, we omit them entirely and only show the output symbol. Also, note that
we label our states with letters, not numbers, for clarity.

\begin{figure}[th]
\centering
\includegraphics{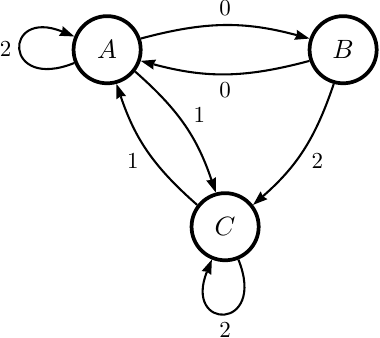}
\caption{Arbitrary topological \eM\ with 3 states over alphabet of size 3.}
\label{fig:TopEM}
\end{figure}

Depending on the choice of $q_0$, there are $3$ different representations of
this \eM:
\begin{enumerate}
\setlength{\topsep}{0pt}
\setlength{\itemsep}{0pt}
\item $q_0 = A$:\\[.075in]
	To determine the state ordering, we follow the edge labeled $0$ and get
	$q_1=B$. We follow the edge labeled $2$ from state $B$ to get $q_2=C$.
	In this way we identify $(A,B,C)$ as $(0,1,2)$ and obtain the string
	representation $S_1=[1,2,0,0,-1,2,-1,0,2]$. From this, we compute that
	$B^1_{n,k}(S_1) = 70791$.
\item $q_0 = B$:\\[.075in]
	We find that $q_1=A$ and $q_2=C$. So, we identify $(A,B,C)$ as $(1,0,2)$
	and determine that $S_2=[1,-1,2,0,2,1,-1,1,2]$. This yields
	$B^1_{n,k}(S_2)=55115$.
\item $q_0 = C$:\\[.075in]
	We identify $(A,B,C) = (1,2,0)$, finding that $S_3=[-1,1,0,2,0,1,1,-1,0]$
	and $B^1_{n,k}(S_3)=18977$.
\end{enumerate}
All three strings are valid representations of the \eM, but the third $S_3$ has
the lowest index ($18977$) in the enumeration sequence, so it is the canonical
representation of the \eM. During the enumeration process the other two
representations would be ignored after it was determined they were noncanonical.

With this information in-hand, we can now provide the pseudocode for our
algorithm. For clarity of discussion, we break the algorithm into two pieces. The first generates accessible DFAs, while the second tests to see if they are topological \eMs.

We only highlight the important aspects of the DFA generation algorithm here.
For a more complete discussion, as well as code for implementation, see
Refs.~\cite{Alme06a,Alme07a}.

\begin{Alg}{DFA Generation}
\label{Alg:DFAgen}
~
\begin{enumerate}
{\bf Input}: Number of states $n$, alphabet size $k$.
\setlength{\topsep}{0pt}
\setlength{\itemsep}{0pt}
\item Generate the flags in reverse lexicographic order.
\item For each flag:
\begin{enumerate}
  \setlength{\topsep}{0pt}
  \setlength{\itemsep}{0pt}
  \item Generate strings with this flag one at a time, in lexicographic order.
  	Each is generated from the previous.
  \item Test the DFA string $S$ to see if it is a canonical topological \eM.
  	(See Algorithm~\ref{Alg:EMTest}.)
  \item If the DFA is canonical, output $B^1_{n,k}(S)$ to the list of
  	topological \eMs.
  \item Move to next flag when all strings have been generated.
\end{enumerate}
\item Terminate after last string for last flag has been generated.

{\bf Output}: The list of indices $\{ B^1_{n,k}(S) \}$ of all topological
\eMs\ for the given $n$ and $k$.
\end{enumerate}
\end{Alg}

\begin{Alg}{Test for topological \eM}
\label{Alg:EMTest}
~
\begin{enumerate}
\setlength{\topsep}{0pt}
\setlength{\itemsep}{0pt}
{\bf Input}: DFA $X$ in string representation $S$ and $B^1_{n,k}(S)$.
\item Reject $X$ unless it has at least $n$ transitions.
\item Reject $X$ if it has $nk$ transitions.
\item For $i=1,\ldots,n-1$:
  \begin{enumerate}
  \setlength{\topsep}{0pt}
  \setlength{\itemsep}{0pt}
  \item Create a new DFA $Y_i$ from DFA $X$ with $q_0=i$.
  \item Reject $X$ if the states of $Y_i$ cannot be labeled by follow edges
  	lexicographically from $q_0$.\\
       (\emph{$X$ is not strongly connected.})
  \item Build string $S_i$ for $Y_i$.
  \item Compute index $B^1_{n,k}(S_i)$.
  \item Reject $X$ if $B^1_{n,k}(S_i) \leq B^1_{n,k}(S)$.\\
        (\emph{$X$ is not canonical.})
  \end{enumerate}
\item Reject $X$ if it is not a minimal DFA. 

{\bf Output}: True or False, whether the input DFA is a canonical
representation of a topological \eM.
\end{enumerate}
\label{alg:TestTopoEM}
\end{Alg}

Note that steps $1$ and $2$ are not formally necessary for the algorithm to
work, as any DFA that fails these will be not strongly connected and nonminimal,
respectively. However, it is quicker to perform these tests than it is to
check for connectedness or minimality, and it is for these reasons that
Lemmas~\ref{lem:MinEdgesLem} and~\ref{lem:MaxEdgesLem} were mentioned.

\begin{Prop}
The above algorithm generates all topological \eMs\ with $n$ states and $k$
symbols.
\label{prop:EnumerationProp}
\end{Prop}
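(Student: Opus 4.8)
The plan is to argue that the two-stage pipeline (Algorithm~\ref{Alg:DFAgen} feeding candidates into Algorithm~\ref{Alg:EMTest}) is both \emph{sound} — every object it outputs is the canonical string of a topological \eM{} — and \emph{complete} — every topological \eM{} on $n$ states and $k$ symbols is output exactly once. I would open by recalling, from Ref.~\cite{Alme07a} (and as summarized in the excerpt), the two facts we are allowed to take for granted: first, Algorithm~\ref{Alg:DFAgen} enumerates every accessible DFA with $n$ states and $k$ symbols exactly once, in bijection with the indices $0,\dots,B^1_{n,k}-1$; and second, for an accessible DFA the choice of start state $q_0$ determines the state labeling, hence the string $S$, via the lexicographic-breadth-first relabeling. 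These are the scaffolding on which the whole argument rests, so I would not reprove them.

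For soundness, I would take an arbitrary DFA string $S$ that survives Algorithm~\ref{Alg:EMTest} and show the underlying machine $X$ is a topological \eM{} in canonical form. Step~3(b) of the test guarantees that every state can serve as $q_0$ and still yield an accessible DFA, which is exactly the equivalent characterization of \emph{strongly connected} given in the excerpt; so $X$ is strongly connected. Step~4 rejects non-minimal DFAs, so $X$ is minimal. Interpreting the uniform-probability assignment on the (at least $n$, by Lemma~\ref{lem:MinEdgesLem}, and at most $nk-1$, by Lemma~\ref{lem:MaxEdgesLem}) outgoing edges makes $X$ a topological \eM{} by definition. Finally, step~3(e) rejects $X$ whenever some relabeling from start state $i$ gives a string $S_i$ with $B^1_{n,k}(S_i) \leq B^1_{n,k}(S)$ — i.e. whenever $S$ is not the minimum-index string among the $n$ isomorphic representations guaranteed by Lemma~\ref{lem:NumIsomorphsLem} — so the surviving $S$ is precisely the canonical representation in the sense of the Definition preceding the algorithm. (One routine point to nail down: the test loops only over $i = 1,\dots,n-1$, omitting $i=0$; this is harmless because starting from $q_0 = 0$ reproduces $S$ itself, and Lemma~\ref{lem:NumIsomorphsLem} says the $n$ relabelings from the $n$ start states exhaust the isomorphism class, so nothing is missed.)

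For completeness, I would start from an arbitrary topological \eM{} $M$ with $n$ states and $k$ symbols and show its canonical string $\widehat{S}$ is output. Since $M$ is strongly connected it is accessible, so $\widehat{S}$ is a legitimate accessible-DFA string and hence, by the assumed bijection, is generated exactly once somewhere in the run of Algorithm~\ref{Alg:DFAgen} — in particular it appears under its own flag in step~2(a), since the flag is a deterministic function of the string. When $\widehat{S}$ is handed to Algorithm~\ref{Alg:EMTest} it must pass: the edge-count tests pass by Lemmas~\ref{lem:MinEdgesLem} and~\ref{lem:MaxEdgesLem}; strong connectivity of $M$ means every relabeling in step~3(b) succeeds; minimality of $M$ (an \eM{} is minimal by construction) passes step~4; and because $\widehat{S}$ is by \emph{definition} the minimum-index member of its isomorphism class, no $S_i$ can have a strictly smaller index, so step~3(e) never fires. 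Thus $\widehat{S}$ is output. Uniqueness — that no noncanonical representative of $M$ is also output — is immediate from step~3(e), and that distinct \eMs{} give distinct canonical strings is immediate because isomorphic DFAs share the same set of $n$ relabeling strings while non-isomorphic ones do not. Combining soundness and completeness gives the claim.

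The main obstacle I anticipate is not any single deep step but the bookkeeping around the canonical-form argument: one has to be careful that "the $n$ strings $S_1,\dots,S_n$" in the canonical-representation definition are genuinely the images of $M$ under choosing each of the $n$ recurrent states as $q_0$ (this is where Lemma~\ref{lem:NumIsomorphsLem} and the accessibility-from-each-state characterization of strong connectivity both get used), and that the index function $B^1_{n,k}(\cdot)$ is well-defined and injective on accessible-DFA strings so that "the minimum" is unambiguous. A secondary subtlety worth a sentence is that restricting attention to recurrent states (as the excerpt does throughout) is exactly what lets us assume strong connectivity rather than mere accessibility; I would flag that the enumeration is of the recurrent cores of topological \eMs, with transient parts reconstructible afterward, so there is no loss.
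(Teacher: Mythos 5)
Your proof takes essentially the same route as the paper's: invoke the result of Ref.~\cite{Alme07a} that all accessible DFAs are generated, then observe that steps 3(b), 4, and 3(e) of the test enforce strong connectivity, minimality, and canonicality respectively, which together characterize exactly the canonical representatives of topological \eMs. You supply considerably more detail than the paper's four-sentence sketch (explicit soundness/completeness split, the harmlessness of omitting $i=0$ from the loop, injectivity of the index map), but the underlying argument is the same.
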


\begin{proof}
It was already shown in Refs.~\cite{Alme06a,Alme07a} that the original algorithm
generates all accessible DFAs with $n$ states and $k$ symbols. We need only
show that our additions result in only topological \eMs\ being generated.

As stated previously, topological \eMs\ are minimal and strongly connected. We
also require a single representative of an isomorphism class. We check that we
only get strongly connected DFAs in step $3(b)$, and we get minimality from step
$4$. Finally, we prune state isomorphisms with the test in step $3(e)$.
\end{proof} 

See Ref.~\cite{Hopc79} for details on the minimization algorithm used here.
Also, note that we are not interested in the minimal DFA itself, only whether
the given DFA \textit{is} minimal. We minimize the automaton and accept it if it
has the same number of states as the original.

Note that the order of the above checks for connectedness, minimality, and
isomorphic redundancy can be changed, but the performance of the algorithm
suffers. The minimization algorithm is the slowest step, so it should
be performed as few times as necessary, which is why it appears last. 

\vspace{-0.1in}
\section{Results}
\vspace{-0.1in}

We ran the algorithm on a range of $n$ and $k$ values. To date, the majority
of work in computational mechanics focused on binary alphabets, so we provide
not only the number $E_{n,2}$ of \eMs\ with a binary alphabet, but
also a breakdown by the number of edges (transitions) for a given number of
states in Table \ref{Table:Binary}.

\begin{table}
\begin{center}
\begin{tabular}{|lcr|r|}
\hline
 States    & Edges & $E_{n,2}$ & $B^1_{n,2}$\\
 \hline
 1     &       & 3   & \\
       &  1    & 2   & \\
       &  2    & 1   & \\
\hline
 2     &       & 7   &  45 \\
       &  2    & 1   & \\
       &  3    & 6   & \\
 
 \hline
 3     &       &  78 & 816 \\
       &  3    &   2 & \\
       &  4    &  22 & \\
       &  5    &  54 & \\
 
 \hline
 4     &       & 1,388 & 20,225  \\
	   &  4    &     3 &  \\
       &  5    &    68 &  \\
 	   &  6    &   403 &  \\
	   &  7    &   914 &  \\
 \hline
 5     &       &  35,186  &  632,700\\
       &  5    &       6  & \\
 	   &  6    &     192  & \\
	   &  7    &   2,228  & \\
	   &  8    &  10,886  & \\
	   &  9    &  21,874  & \\
 \hline
 6     &        & 1,132,613  & 23,836,540\\
	   &   6    &         9  & \\
	   &   7    &       512  & \\
	   &   8    &     9,721  & \\
	   &   9    &    85,974  & \\
       &  10    &   360,071  & \\
       &  11    &   676,326  & \\
 \hline
 7     &        &   43,997,426 & 1,048,592,640 \\
	   &   7    &           18 & \\
	   &   8    &        1,312 & \\
	   &   9    &       37,736 & \\
       &  10    &      526,760 & \\
       &  11    &    3,809,428 & \\
       &  12    &   14,229,762 & \\
       &  13    &   25,392,410 & \\
\hline
8      &        & 1,993,473,480 & 52,696,514,169 \\
       &   8    &            30 & \\
       &   9    &         3,264 & \\
       &  10    &       133,218 & \\
       &  11    &     2,729,336 & \\
       &  12    &    30,477,505 & \\
       &  13    &   190,505,028 & \\
       &  14    &   651,856,885 & \\
       &  15    & 1,117,768,214 & \\
\hline
\end{tabular}
\end{center}
\caption{The number $E_{n,2}$ of binary-alphabet topological \eMs\ as a
  function of the number of states ($n$) and edges ($k$). The number
  $B_{n,2}^1$ of accessible binary DFAs is listed for comparison.
  }
\label{Table:Binary}
\end{table}

Looking at the numbers in the table, we see that the number of \eMs\ increases
quite rapidly, but when compared to the total number $B_{n,2}^1$ of accessible
binary DFAs, the ratios decrease. At $n=3$, 9.6\% of all accessible DFAs were
topological \eMs; while at $n=8$, that ratio was already down to 3.8\%. We
also see that for any given number of states, the majority of \eMs\ have the
maximum number of possible edges. This is not surprising as a DFA is more
likely to be strongly connected with more edges present.

We note that $E_{n,2}$ is now listed on the \emph{On-Line Encyclopedia of
Integer Sequences} as sequence A181554 \cite{OEIS10a}.

We can certainly consider larger alphabets, and Table ~\ref{Table:All} provides
the number $E_{n,k}$ of \eMs\ for a given number of states $n$ and alphabet
size $k$.

Using the data in Table ~\ref{Table:All} we again consider the ratios of
$E_{n,k}/B^1_{n,k}$. Looking at $2$-state machines with an increasing alphabet,
the ratio quickly approaches $1/2$, indicating that almost every
accessible DFA with 2 states is a topological \eM. (Recall that half of all
machines are noncanonical isomorphisms.)

Although data is lacking to make a definitive conclusion, there is also a trend
that the number of \eMs\ increases more rapidly with increasing states (at
large alphabet) than with increasing alphabet size. This agrees with how the
number of accessible DFAs grows given these two conditions, but we need more
data to be sure.

At this point, we need to address two types of overcounting that appear in
Table \ref{Table:All}. The first occurs due to multiple representations of
a process using a larger alphabet. For example, all machines over $l\geq 2$
letters are also machines over $k$ letters for $k>l$. In fact, there are
$\binom{k}{l}$ representations for each $l$-ary machine in the $k$-ary library.
One may be more interested, however, in new structural features and process
characteristics that appear with a larger alphabet rather than the number
of ways we can re-represent machines with smaller alphabets. As such, Table
\ref{Table:FullAlphabet} provides the number $F_{n,k}$ of topological
\eMs\ that employ all $k$ letters. These machines cannot be found for smaller
$k$ and are, thus, ``new" due to the larger alphabet.

The second type of overcounting is due to symbol isomorphism. Certain processes
listed in both Tables \ref{Table:All} and \ref{Table:FullAlphabet} have
multiple representations that are different as \eMs\, but have the
same characteristics---for example, when quantified using information-theoretic
measures of complexity. The Even Process, to take one example, can be considered
as having even-length blocks of $1$s, as depicted in
\figref{fig:TopEvenProcess}, or even-length blocks of $0$s. The measurable
process characteristics are the same for these two processes. We include both
in our list, as the numbers are of interest to those studying finite-state
transducers, as one example.

\begin{table}
\begin{center}
\begin{tabular}{|c|c|c|c|c|c|c|}
\hline
$n \backslash k$ &             2 &         3 &         4 &         5 &       6\\
\hline
			1    &             3 &         7 &        15 &        31 &      63\\
			2    &             7 &       141 &     1,873 &    20,925 & 213,997\\
			3    &            78 &    15,598 & 1,658,606 & 136,146,590 \\
			4    &         1,388 & 3,625,638 \\
			5    &        35,186 \\
			6    &     1,132,613 \\
			7    &    43,997,426 \\
			8    & 1,993,473,480 \\
\end{tabular}
\end{center}
\caption{The number $E_{n,k}$ of topological \eMs\ as a function of number of
  states $n$ and alphabet size $k$.
  }
\label{Table:All}
\end{table}

We also note that Tables \ref{Table:All} and \ref{Table:FullAlphabet} are
incomplete. This is not a shortcoming of the algorithm, but rather a comment
on the exploding number of \eMs. Looking only at the binary alphabet \eMs,
we see that their numbers increase very rapidly.

\begin{table}
\begin{center}
\begin{tabular}{|c|c|c|c|c|c|c|}
\hline
$n\backslash k$ &           2 &         3 &         4 &           5 &       6\\
\hline
			1 &             1 &         1 &         1 &           1 &       1\\
			2 &             7 &       120 &     1,351 &      12,900 & 113,827\\
			3 &            78 &    15,364 & 1,596,682 & 128,008,760 \\
			4 &         1,388 & 3,621,474 \\
			5 &        35,186\\
			6 &     1,132,613\\
			7 &    43,997,426\\
			8 & 1,993,473,480\\
\end{tabular}
\end{center}
\caption{The number $F_{n,k}$ of full-alphabet topological \eMs\ as a
  function of number of states $n$ and alphabet size $k$.
  }
\label{Table:FullAlphabet}
\end{table}

Looking at the generation times for binary alphabet machines in Table
\ref{Table:RunTimes}, we see that the run times increase very rapidly also.
Our estimate for $9$-state binary machines is approximately $35$ CPU days.
Naturally, since they depend on current technology, the absolute times are less
important than the increasing ratios of run times.

\begin{table}[t]
\begin{center}
\begin{tabular}{|c|c|}
\hline
$n$  & time (seconds) \\
\hline
3 & $1.00 \times 10^{-2}$ \\
4 & $1.30 \times 10^{-2}$ \\
5 & $2.75 \times 10^{-1}$ \\
6 & $1.39 \times 10^{1}$  \\
7 & $7.80 \times 10^{2}$  \\
8 & $4.94 \times 10^{4}$  \\
\end{tabular}
\end{center}
\caption{Average run times ($2.4$ GHz Intel Core 2 Duo CPU) to generate all
  binary alphabet topological \eMs\ as a function of the number $n$ of states.
  }
\label{Table:RunTimes}
\end{table}

%  ************************* USES *************************
\vspace{-0.2in}
\section{Applications}
\vspace{-0.1in}

Computational mechanics considers a number of different properties---including
the entropy rate, statistical complexity, and excess entropy---to quantify a
process's ability to store and transform information \cite{CompMechMerge}.
Additionally, there are 
known bounds on a number of these quantities as well as generalizations of
\eMs\ that achieve these bounds; e.g., see the binary \eM\ survey in
Ref.~\cite{Feld08a}. However, little is known about the nonbinary alphabet case
and about other more recently introduced quantities, such as
causal irreversibility and crypticity~\cite{Crut08a}.
A survey of the intrinsic Markov order and the cryptic order for $6$-state
\eMs\ recently appeared in Ref.~\cite{Jame10a}. A series of sequels will
provide additional surveys---all of which depend on the \eM\ libraries we have
shown how to construct.

Beyond this kind of fundamental understanding of the space of stochastic
processes and the genericity of properties, \eM\ enumeration has a range of
practical applications. One often needs to statistically sample
representations of finite-memory stochastic processes and a
library of \eMs\ forms the basis of such sampling schemes. In the
computational mechanics analysis of spatiotemporal patterns in
spatial dynamical systems, \eMs\ play the role of representing
spacetime shift-invariant sets of configurations. The library can
then be used in computer-aided proofs of the domains, particles,
and particle interactions that are often emergent in such systems, as done in
Ref. \cite{Crut02a}.
Finally, in Bayesian statistical inference from finite data,
priors over the space of \eMs\ are updated based on the evidence
the data provides.
Applications along these lines will appear elsewhere.

%  ************************* CONCLUSION *************************
\section{Conclusion}

Beginning with an algorithm for enumerating and generating accessible
DFAs, we showed how to enumerate all topological \eMs\ based on the fact that
they are strongly connected and minimal DFAs, discounting for isomorphic
redundancies along the way.

There are a number of open problems and extensions to the algorithm and
enumeration procedure to consider. Ideally, we would like to modify this
algorithm, or create an altogether new one, that directly generates
topological \eMs\ without having to generate a larger class of
objects---counted via $B_{n,k}^1$---that we then prune. Failing this, at least
we would like to generate a smaller class of DFAs, perhaps only those that are
strongly connected, so that fewer candidate DFAs need be eliminated.

We would also like to find a closed-form expression for the number of
topological \eMs\ for a given $n$ and $k$. If this is not possible, we would
like reasonable upper bounds on this quantity (better than $B_{n,k}$) and,
perhaps, asymptotic estimates of the number of accessible DFAs that are
actually topological \eMs. Along these lines, we conjecture that for fixed $k$,
$\lim\limits_{n\to\infty} E_{n,k} / B^1_{n,k} = 0$ and, for fixed $n$,
$\lim\limits_{k\to\infty} E_{n,k} / B^1_{n,k} = 1/n$.

\appendix

\section{String-index mapping}
\label{app:Mapping}

Let $S$ be some DFA string representation, and let $f$ be the flag corresponding
to $S$. Then we have $B^1_{n,k}(s) = n_f + n_r$, where:
\begin{align}
n_f & = \sum_{j=1}^{n-1}
  \left[ \prod_{m=0}^{j-1} (m+2)^{f_{m+1}-f_m-1} \right. \nonumber \\
  & ~~~~~~~~~ \times \left. \sum_{l=f_j+1}^{jk-1} \left( (j+1)^{l-f_j}N^1_{j,l} \right) \right]
\label{eqn:nf}
\end{align}
%and
\begin{align}
n_r & = \sum_{j=1}^{n-1}
  \left[ \sum_{l=f_j+1}^{f_{j+1}-1} s_l(j+2)^{f_{j+1}-1-l} \right. \nonumber \\
  & ~~~~~~~ \times \left.
  \left( \prod_{m=j+1}^{n-1}(m+2)^{f_{m+1}-f_m-1} \right) \right] .
\label{eqn:nr}
\end{align}
Equation (\ref{eqn:nf}) calculates the first index that uses the given flag,
and Eq. (\ref{eqn:nr}) calculates the index of the string $S$ among those
DFAs with the given flag.

Eq.~(\ref{eqn:nf}) refers to the number $N^1_{j,l}$ of accessible DFAs whose
string representation has the first occurrence of symbol $j$ occur in position
$l$. It can be defined by a recursive formula and its values stored in a
table for efficient access. For completeness we provide the formulas here,
but for more detail we direct the reader to Refs.~\cite{Alme06a,Alme07a}:
\begin{align*}
N^1_{n-1,j}  & = (n+1)^{nk-1-j} ~, j \in [n-2,(n-1)k-1]
  \displaybreak[0] \\
N^1_{m,mk-1} & =\sum_{i=0}^{k-1}(m+2)^i N^1_{m+1,mk+i} ~, m \in [1,n-2]
  \displaybreak[0] \\
N^1_{m,j}    & = (m+2) N^1_{m,j+1}+N^1_{m+1,j+1} ~, \\
	& ~~~ m \in [1,n-2], j \in [m-1,mk-2] ~.
\end{align*}

\section*{Acknowledgments}

CSM thanks Fred Annexstein for originally pointing out Read's algorithm.
BDJ was partially supported on an NSF VIGRE graduate fellowship.
CSM was partially supported on an NSF REU internship at the Santa
Fe Institute during the summer of 2002.
CJE was partially supported by a DOE GAANN graduate fellowship.
This work was partially supported by the Defense Advanced Research Projects
Agency (DARPA) Physical Intelligence Subcontract No. 9060-000709
and ARO award W911NF-12-1-0234-0. The views,
opinions, and findings contained in this article are those of the authors and
should not be interpreted as representing the official views or policies,
either expressed or implied, of the DARPA or the Department of Defense.

\vspace{-0.1in}

\bibliography{ref,chaos}

\end{document}